\newtheorem{theorem}{Theorem} 
\newtheorem{proposition}{Proposition}
\newtheorem{corollary}{Corollary}
\newcommand{\md}{\ensuremath{{\ominus}}}
\newcommand{\lra}{\ensuremath{{\leftrightarrow}}}
\newcommand{\sneg}{{\sim}}
\newcommand{\Ra}{\ensuremath{~{\Rightarrow}~}}
\newcommand{\V}{\mathcal{V}}
\newcommand{\M}{\mathcal{M}}
\newcommand{\F}{\mathcal{F}}
\newcommand{\RM}{\mathcal{RM}}
\newcommand{\Hi}{\mathcal{H}}
\newcommand{\axK}{\textsf{(K)}}
\newcommand{\axGL}{\textsf{(GL)}}
\newcommand{\axT}{\textsf{(T)}}
\newcommand{\axfo}{\textsf{(4)}}
\newcommand{\cpl}{{\bf CPL}}
\newcommand{\MP}{\textsf{MP}}
\newcommand{\ax}{\textsf{Ax}}
\DeclareSymbolFont{symbolsC}{U}{txsyc}{m}{n}
\DeclareMathSymbol{\strictif}{\mathrel}{symbolsC}{74}
\title{Modal Logics -- RNmatrices vs. Nmatrices}
\author{Marcelo E. Coniglio
\institute{University of Campinas\\ Campinas, Brazil}
\email{coniglio@unicamp.br}
\and
Pawe{\l} Paw{\l}owski
\institute{Ghent University\\
Ghent, Belgium}
\email{\quad pawel.pawlowski@ugent.be}
\and
Daniel Skurt
\institute{Ruhr University Bochum\\
Bochum, Germany}
\email{\quad daniel.skurt@rub.de}
}
\begin{document}
\maketitle

\begin{abstract}

In this short paper we will discuss the similarities and differences between two semantic approaches to modal logics -- non-deterministic semantics and \emph{restricted} non-deterministic semantics. Generally speaking, both kinds of semantics are similar in the sense that they employ non-deterministic matrices as a starting point but differ significantly in the way extensions of the minimal modal logic {\bf M} are constructed.

Both kinds of semantics are many-valued and truth-values are typically expressed in terms of tuples of 0s and 1s, where each dimension of the tuple represents either truth/falsity, possibility/non-possibility, necessity/non-necessity etc.  And while non-deterministic semantics for modal logic offers an intuitive interpretation of the truth-values and the concept of modality, with restricted non-deterministic semantics are more general in terms of providing extensions of {\bf M}, including normal ones, in an uniform way.

On the example of three modal logics, {\bf MK}, {\bf MKT} and {\bf MKT4}, we will show the differences and similarities of those two approaches. Additionally, we will briefly discuss (current) restrictions of both approaches.
\end{abstract}

\section{Introduction} \label{sect:Intro}

We begin our study with the weakest system of modal logic --  {\bf M}.  This system is an expansion of classical propositional logic with a unary operator $\md$  and is characterized as follows: 

\begin{itemize}
	\item {\bf M} contains all (classical) tautologies 
	\item {\bf M} is closed under uniform substitution
	\item {\bf M} is closed under Modus Ponens
\end{itemize}

\noindent This starting point for investigating modal logics is not new. Logicians like Krister Segerberg \cite{segerberg1971}, David Makinson \cite{Makinson1971}, Heinrich Wansing \cite{wansing1989} and Lloyd Humberstone \cite{Humberstone2016} started their studies of modal logics with a similar weak system of modal logic, as well.\footnote{They either call it {\bf L$_0$}, {\bf PC} or {\bf S}. In the more recent \cite{Lukas2022} this system is called {\bf 0}. The main difference between their and our starting point is that they interpret  $\md$ from the beginning as a necessity operator.}

In the presentations for the smallest modal system by Segerberg, Makinson, Wansing or Humberstone the meaning of the modal operator $\md$ is not kept for all extensions of the smallest modal system. Since in practice, what happens is a shift of meaning for the operator $\md$. From no meaning in {\bf M} to a meaning constituted in possible worlds, where in all extensions, e.g. $\md A$ is true in a world iff $A$ is true in all accessible worlds.\footnote{If we interpret $ \md$ as necessity.} Similar things can be said about neighborhood frames or some versions of truth-maker semantics. These shifts are made rather abruptly in order to generate the needed behavior of the modal operator. In our approach we keep the meaning of $\md$ the same, and thus establish a uniform theory of modal operators.

This way of aiming at a uniform theory of modal operators is not new. In recent publications, cf.~\cite{Coniglio2019, OmoriSkurt22}, {\bf M} and some of its normal extensions were investigated as part of a larger discussion concerning non-deterministic semantics for non-normal modal logics, in the sense that the rule of necessitation is absent, and normal modal logics. There, the authors build upon the framework of non-deterministic semantics, which was systematically introduced by Arnon Avron and his collaborators, cf. \cite{AZSurvey}, but already used in the context of modal logics by Yuri Ivlev and John Kearns, cf. \cite{Ivlev88,Ivlev91}, \cite{Kearns81,Kearns89}, and further developed more recently for example in \cite{Coniglio2021, Lukas2022, OmoriSkurt16, OmoriSkurt2020, OmoriSkurt2021, pawlowski2022, PawlowskiSkurt2023, PawlowskiSkurt2024}. 

In this paper, we have a humble objective. We will present two different strategies of constructing semantics for modal logics via Nmatrices and via RNmatrices. In Section \ref{sec:minimal} we will introduce the minimal modal logic {\bf M} and show how it can be extended by either eliminating truth-values or non-determinacy in Section \ref{Nmat-exteMx} or by restricting the set of acceptable valuations in Section \ref{RNmat-exteM}\footnote{A largely extended version of that section is currently under review at another venue.}. This is then followed by Section \ref{sec:n-vs-rn}, where we 
will briefly compare both strategies, discuss some open problems and hint a future research of both semantics approaches.

\section{The minimal modal logic {\bf M}}\label{sec:minimal}

To start with, consider a modal propositional signature $\Sigma$ with unary connectives $\neg$ and $\md$ (classical negation and modality, respectively) and a binary connective $\to$ (material implication). Let $\V$ be a denumerable set of propositional variables $\V=\{p_0,p_1, \ldots\}$ and let $For(\Sigma)$ be the algebra of formulas over $\Sigma$ freely generated by $\V$. As usual, conjunction $\land$, disjunction $\vee$ and bi-implication \lra\ are defined from $\neg$ and $\to$ as follows: $A \land B := \neg(A \to \neg B)$, $A \vee B;=\neg A \to B$ and $A \lra B:=(A \to B) \land (B \to A)$.  Note that we could also take $\land$, $\lor$ and $\lra$ as primitive rather than defined connectives. However, due to the non-truth-functional nature of our semantics, presented below, this would require more care wrt the truth-tables and the formulation of later results. Hence, in order to keep our approach accessible to a broader audience, we decided to take smaller set of connectives as primitive.

In this section we consider a set of four-valued\footnote{The number of truth-values is not arbitrary but depends on the number of independent modal operators. In case we would like to consider two, three etc. independent modal operators, the number of truth-values would increase to eight, sixteen etc. values.} non-deterministic matrices (Nmatrices, for short) defined from swap structures (see for instance \cite[Ch. 6]{carnielli2016paraconsistent} and \cite{Coniglio2019a}) in which each truth-value is an ordered pair (or {\em snapshot}) $z=(z_1,z_2)$ in ${\bf2}^2$, for ${\bf2}=\{0,1\}$. Here, $z_1$ and $z_2$ represent, respectively, the truth value of $A$ and of $\md A$ for a given formula $A$ over $\Sigma$. This produces four truth-values $(1,0)$, $(1,1)$, $(0,1)$ and $(0,0)$. 
Let $V_4$ be the set of such truth-values. Accordingly, the set of designated values will be $D_4=\{z \in V_4 \ : \  z_1=1\}=\{(1,0),(1,1)\} = (1,\ast)$. On the other hand, the set of non-designated values is given as $ND_4=\{(0,0),(0,1)\}=(0,\ast)$.\footnote{Note that by $(1,\ast)$ and $(0,\ast)$ we mean sets of values rather than undefined values.}

Because of the intended meaning of the snapshots, i.e. classical operators should behave classically, negation and implication between snapshots are computed over {\bf 2} in the first coordinate, while the second one can takes an arbitrary value. That is:
\begin{center}
$
\begin{array}{rll}
\tilde{\neg}\, z&:=& (\sneg z_1,\ast);\\
z \,\tilde{\to}\, w &:=& (z_1 \Ra w_1,\ast) 
\end{array}
$
\end{center}
\noindent Here, $\sim$ and $\Ra$ denote the Boolean negation and the  implication in {\bf 2}. Observe that the second coordinate is arbitrary since at this moment $\md$ remains uninterpreted, i.e. 
there are no axioms ruling the value of $\md \neg A$ and the value of $\md(A \to B)$.

The interpretation of $\md$ is a multioperator which simply `reads' the second coordinate, while the second coordinate (corresponding to $\md\md A$) will be arbitrary at this point, as well:

\begin{center}
$
\begin{array}{rll}
\tilde{\md}\, z&:=& (z_2,\ast).
\end{array}
$
\end{center}

\noindent Let $\M= \langle V_4, D_4, \mathcal{O}\rangle$ be the obtained 4-valued Nmatrix, where $\mathcal{O}(\#)=\tilde{\#}$ for every connective $\#$ in $\Sigma$, with $\tilde{\#}: V_4 \longrightarrow \mathcal{P}(V_4)$.\footnote{I.e., truth-function for the connective assign non-empty sets of designated or non-designated values.} The truth-tables for $\M$ can be displayed as follows:\footnote{The Nmatrix semantics $\M$ for {\bf M} was already introduced by H. Omori and D. Skurt in~\cite{OmoriSkurt22}, but with a slight different interpretation of the truth-values.}

\begin{center}
\begin{tabular}{|c|c|c|c|c|}
\hline
 $\tilde{\to}$ & $(1,1)$   & $(1,0)$ & $(0,1)$ & $(0,0)$ \\
 \hline \hline
    $(1,1)$     & $(1,\ast)$   & $(1,\ast)$  & $(0,\ast)$ & $(0,\ast)$  \\ \hline
     $(1,0)$    & $(1,\ast)$   & $(1,\ast)$ 	& $(0,\ast)$ & $(0,\ast)$  \\ \hline
     $(0,1)$    & $(1,\ast)$   & $(1,\ast)$ 	& $(1,\ast)$ & $(1,\ast)$  \\ \hline
     $(0,0)$    & $(1,\ast)$   & $(1,\ast)$ 	& $(1,\ast)$ & $(1,\ast)$  \\ \hline
  \end{tabular}
\hspace{0.3cm}
\begin{tabular}{|c||c||c|} \hline
 $A$ & $\tilde{\neg} \, A$ & $\tilde{\md} \, A$ \\
 \hline \hline
     $(1,1)$  & $(0,\ast)$  & $(1,\ast)$   \\ \hline
     $(1,0)$  & $(0,\ast)$  & $(0,\ast)$    \\ \hline
     $(0,1)$  & $(1,\ast)$  & $(1,\ast)$  \\ \hline
     $(0,0)$  & $(1,\ast)$  & $(0,\ast)$   \\ \hline
  \end{tabular}
\end{center}

\noindent Now, let $\F$ be the set of all the valuations over the Nmatrix $\M$, such that  $v \in \F$ iff $v:For(\Sigma) \to V_4$ is a function satisfying the following properties:

\begin{itemize}
\item $v(\# A) \in \tilde{\#}\, v(A)$ for $\# \in \{\neg, \md\}$;
\item $v(A \to B) \in v(A) \, \tilde{\to}\, v(B)$.
\end{itemize}

\noindent The logic  {\bf M} generated by the Nmatrix $\M$ is then defined as follows:  $\Gamma \vDash_{\M} A$ iff, for every $v \in \F$: if $v(B) \in D_4$ for every $B \in \Gamma$ then $v(A) \in D_4$.

Alternatively, any valuation $v \in \F$ can be written as $v=(v_1,v_2)$ such that $v_1,v_2: For(\Sigma) \to {\bf2}$. 
Hence, $v(A)=(v_1(A),v_2(A))$ for every formula $A$. This means that, for all formulas $A$ and $B$:

\begin{itemize}
\item $v(A) \in D_4$ iff $v_1(A)=1$;
\item $v_1(\neg A)=\sneg v_1(A)$;
\item $v_1(\md A)= v_2(A)$;
\item $v_1(A \to B)= v_1(A) \Ra v_1(B)$.
\end{itemize}

\noindent The Hilbert calculus $\Hi$ for {\bf M} consists of the following axioms and a rule of inference.\footnote{Note that no axioms nor rules for $\md$ are given.}

\noindent 
{\small
\begin{minipage}{.5\textwidth}
\begin{align*}
& A \to (B \to A) \tag{\ax 1} \\
& (A \to (B \to C)) \to ((A \to B) \to (A \to C)) \tag{\ax 2}\\
& (\neg B \to \neg A) \to (A \to B)  \tag{\ax 3} \\
\end{align*}
\end{minipage}
\ 
\begin{minipage}{.5\textwidth}
\begin{align*}
& \displaystyle\frac{\ A \quad A {\to} B\ }{B} \tag{$\MP$} 
\end{align*}
\end{minipage}
}

\noindent We write $\Gamma \vdash_{\Hi} A$ if there is a sequence of formulas $B_1, \dots, B_n, A$, $n\geq 0$, such that every formula in the sequence either (i) belongs to $\Gamma$; (ii) is an axiom of $\Hi$; (iii) is obtained by (\MP) from formulas preceding it in sequence.

The following result is then easy to prove:

\begin{theorem} [Soundness and completeness of $\Hi$ w.r.t. $\M$] \label{complete-CPL}
For every $\Gamma \cup \{A\} \subseteq For(\Sigma)$ it holds: $\Gamma \vdash_{\Hi} A$ \ iff \  $\Gamma \vDash_{\M} A$.
\end{theorem}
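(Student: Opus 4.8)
The plan is to prove the soundness and completeness of the Hilbert calculus $\Hi$ with respect to the Nmatrix semantics $\M$ by observing that the modal operator $\md$ is, in effect, semantically inert: the calculus $\Hi$ contains no axioms or rules governing $\md$, and the truth-table for $\tilde{\md}$ merely reads the second coordinate while leaving the new second coordinate arbitrary. Consequently, the designation of a formula (membership in $D_4$) is controlled entirely by the first coordinate $v_1$, which behaves exactly like a classical Boolean valuation on the classical connectives $\neg$ and $\to$. The strategy is therefore to reduce both directions to the corresponding well-known facts for classical propositional logic \cpl.

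For soundness, I would proceed by a routine induction on the length of a derivation of $A$ from $\Gamma$ in $\Hi$. The base cases are the axioms $(\ax 1)$--$(\ax 3)$: using the reformulation $v=(v_1,v_2)$ noted in the excerpt, where $v_1(\neg A)=\sneg v_1(A)$ and $v_1(A\to B)=v_1(A)\Ra v_1(B)$, each axiom instance has first coordinate computed by the classical Boolean clauses, and since $(\ax 1)$--$(\ax 3)$ are classical tautologies, one gets $v_1(A)=1$, i.e. $v(A)\in D_4$, for every $v\in\F$. The inductive step for \MP\ follows because $v(A)\in D_4$ and $v(A\to B)\in D_4$ force $v_1(A)=1$ and $v_1(A)\Ra v_1(B)=1$, whence $v_1(B)=1$. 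Thus every derivation from premises designated by $v$ ends in a designated formula, establishing $\Gamma\vdash_{\Hi}A\Rightarrow\Gamma\vDash_{\M}A$.

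For completeness, I would argue contrapositively: assuming $\Gamma\not\vdash_{\Hi}A$, I would build a valuation $v\in\F$ designating every member of $\Gamma$ but not $A$. The key move is to extend $\Gamma$ to a maximal $\Hi$-consistent set $\Delta$ with $A\notin\Delta$, using a standard Lindenbaum construction justified by the deduction theorem, which holds for $\Hi$ since $(\ax 1)$ and $(\ax 2)$ are present. One then defines the first coordinate by $v_1(B)=1$ iff $B\in\Delta$; the maximal consistency of $\Delta$ guarantees the classical recursion clauses for $v_1$ on $\neg$ and $\to$. The second coordinate $v_2$ may be chosen completely freely, for instance by setting $v_2(B)=v_1(\md B)$, which is legitimate precisely because $\tilde{\md}$, $\tilde{\neg}$ and $\tilde{\to}$ all leave the second coordinate unconstrained ($\ast$ ranges over all of $\bf2$). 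I would then verify that the resulting $v=(v_1,v_2)$ genuinely lies in $\F$, i.e. respects the three membership conditions for $\tilde{\neg}$, $\tilde{\md}$ and $\tilde{\to}$.

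The main obstacle, though modest, is confirming that the freedom in the second coordinate can always be resolved into a single well-defined valuation $v\in\F$: one must check that the natural choice $v_2(B)=v_1(\md B)$ is consistent with all the Nmatrix clauses simultaneously, and in particular that the clause $v_1(\md B)=v_2(B)$ is satisfied by construction while the arbitrariness of the second coordinate in $\tilde{\neg}$, $\tilde{\to}$ and $\tilde{\md}$ imposes no further constraints. Once this is in place, $v$ designates all of $\Gamma$ (since $\Gamma\subseteq\Delta$) but not $A$ (since $A\notin\Delta$), giving $\Gamma\not\vDash_{\M}A$ and completing the contrapositive. I expect the whole argument to be short, as the theorem is essentially \cpl\ completeness transported through the first coordinate, with the modal layer contributing nothing beyond an unconstrained extra bit.
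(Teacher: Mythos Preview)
Your proposal is correct and is precisely the standard argument one would expect here. The paper itself gives no proof of this theorem beyond the remark that it ``is then easy to prove''; however, the method you outline---induction on derivations for soundness, and for completeness a Lindenbaum extension to a maximal $\Hi$-consistent (equivalently, $A$-saturated) set $\Delta$ with the canonical valuation $v_1(B)=1$ iff $B\in\Delta$ and $v_2(B)=v_1(\md B)$---is exactly the template the paper later spells out for Theorem~\ref{complete-HK-RN} via Proposition~\ref{A-sat-HK} and Corollary~\ref{bival-sat}. Your check that the choice $v_2(B)=v_1(\md B)$ makes the clause $v_1(\md B)=v_2(B)$ hold by construction, while the $\ast$-coordinates of $\tilde{\neg}$, $\tilde{\to}$, $\tilde{\md}$ impose nothing further, is the only point requiring care and you handle it correctly.
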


Being the minimal modal logic, at first glance, {\bf M} seems to be nothing else than \cpl\ presented in a language with a modal operator \md\ without an interpretation. For instance, it does neither satisfy the axiom $\axK: \md(A \to B) \to (\md A \to \md B)$ nor the rule of necessitation. However, {\bf M} can not be characterized by a finite deterministic matrix, since any such characterization would designate a formula similar to the well-known Dugundji construction, which is of course not derivable in {\bf M}, cf. \cite{Lukas2022}.

Since $\md$ is supposed to represent any given modal operator (for instance, a possibility operator $\Diamond$) it should be expected that $\md$ has no fixed interpretation yet. But it can be shown that the nature of the modality, whether $\md$ can be interpreted as necessity, possibility, knowledge, obligation  etc., will strongly depend on our choice of axioms we want to be valid. However, $\md$ is not meaningless, since $\md A$ will be designated, iff $v_2(A) = 1$.

In the next sections we will extend the minimal modal logic {\bf M} with two modal axioms, $\md (A \to B) \to (\md A \to \md B)$ $\axK$, $\md A \to A$ $\axT$ and $\md A \to \md\md A$ \axfo, respectively, and later the rule of necessitation and thus show differences and similarities between two approaches for constructing non-deterministic semantics for modal logics. To this end, let $\Hi_{\textsf{K}}$ be the Hilbert calculus over $\Sigma$ obtained from $\Hi$ by adding axiom schema \axK. And let $\Hi_{\textsf{KT}}$ be the Hilbert calculus over $\Sigma$ obtained from $\Hi_{\textsf{K}}$ by adding axiom schema \axT. Furthermore, let $\Hi_{\textsf{KT4}}$ be the Hilbert calculus over $\Sigma$ obtained from $\Hi_{\textsf{KT}}$ by adding the axiom schema \axfo. The corresponding consequence relations $\vdash_{\Hi_{\textsf{K}}}$, $\vdash_{\Hi_{\textsf{KT}}}$ and $\vdash_{\Hi_{\textsf{KT4}}}$ are defined in similar manner than $\vdash_{\Hi}$

Before we continue, however, we will quickly show that these three axioms, \axK, \axT, \axfo, are not valid in {\bf M}. For $\axK$ consider a valuation such that $v(A) = (1,1)$ and $v(B) = (1,0)$, hence $v(\md A)$ is designated and $v(B)$ is non-designated, i.e. $\md A \to \md B$ is non-designated. Because $A \to B$ is designated and there is valuation such that $\md (A \to B)$ is designated, namely $v(A\to B) = (1,1)$, $\md (A \to B) \to (\md A \to \md B)$ is not valid in {\bf M}. As for $\axT$, consider a valuation such that $v(A) = (0,1)$. Then $v(\md A) = (1,1)$ or $(1,0)$. Hence, $\md A \to A$ will get a non-designated value. For $\axfo$ just take a valuation that assigns $\md A$ the value $(1,0)$.

\section{Nmatrices for {\bf MK}, {\bf MKT} and {\bf MKT4}}\label{Nmat-exteMx}

In this section, we will quickly recapitulate results from previous works, e.g. \cite{Coniglio2015,Coniglio2019a} or \cite{OmoriSkurt16,OmoriSkurt22} and show how to systematically develop non-deterministic semantics for certain extensions of {\bf M} by either eliminating some non-determinacy from the truth-tables or eliminating truth-values. 

\paragraph{{\bf MK}}

\noindent Let $\M_{\textsf{K}}= \langle V_4, D_4, \mathcal{O}\rangle$. The truth-tables for $\M_{\textsf{K}}$ can be displayed as follows:\footnote{We omit brackets for sets.}

\begin{center}
\begin{tabular}{|c|c|c|c|c|}
\hline
 $\tilde{\to}$ & $(1,1)$   & $(1,0)$ & $(0,1)$ & $(0,0)$ \\
 \hline \hline
    $(1,1)$     & $(1,\ast)$   & $(1,0)$  & $(0,\ast)$ & $(0,0)$  \\ \hline
     $(1,0)$    & $(1,\ast)$   & $(1,\ast)$ 	& $(0,\ast)$ & $(0,\ast)$  \\ \hline
     $(0,1)$    & $(1,\ast)$   & $(1,0)$ 	& $(1,\ast)$ & $(1,0)$  \\ \hline
     $(0,0)$    & $(1,\ast)$   & $(1,\ast)$ 	& $(1,\ast)$ & $(1,\ast)$  \\ \hline
  \end{tabular}
\hspace{0.3cm}
\begin{tabular}{|c||c||c|} \hline
 $A$ & $\tilde{\neg} \, A$ & $\tilde{\md} \, A$ \\
 \hline \hline
     $(1,1)$  & $(0,\ast)$  & $(1,\ast)$   \\ \hline
     $(1,0)$  & $(0,\ast)$  & $(0,\ast)$    \\ \hline
     $(0,1)$  & $(1,\ast)$  & $(1,\ast)$  \\ \hline
     $(0,0)$  & $(1,\ast)$  & $(0,\ast)$   \\ \hline
  \end{tabular}
\end{center}

\noindent Alternatively, we can calculate the value of $\Tilde{\to}$ by adding the following conditions to $\M$:
\begin{center}
$
\begin{array}{rll}
z \,\tilde{\to}\, w &:=& (x_1,x_2) = (x_1 = (z_1 \Ra w_1)~,~x_2 \leq z_2 \Ra w_2) 
\end{array}
\
\quad \text{ or }\qquad
\
v_2(A \to B) \leq v_2(A) \Ra v_2(B)
$
\end{center}

\paragraph{{\bf MKT}}

\noindent Now, let $V_3 = \{(z_1,z_2) \in V_4 \ : \ z_1 \geq z_2 \}$, i.e. $V_3 = V_4\backslash \{(0,1\}$. Accordingly, the set of designated values will be $D_3=\{z \in V_3 \ : \  z_1=1\} = D_4$. Then we can define $\M_{\textsf{KT}}= \langle V_3, D_3, \mathcal{O}\rangle$. The truth-tables for $\M_{\textsf{KT}}$ can be displayed as follows:

\begin{center}
\begin{tabular}{|c|c|c|c|c|}
\hline
 $\tilde{\to}$ & $(1,1)$   & $(1,0)$ & $(0,0)$ \\
 \hline \hline
    $(1,1)$     & $(1,\ast)$   & $(1,0)$  & $(0,0)$  \\ \hline
     $(1,0)$    & $(1,\ast)$   & $(1,\ast)$ 	& $(0,\ast)$  \\ \hline
     $(0,0)$    & $(1,\ast)$   & $(1,\ast)$ 	& $(1,\ast)$  \\ \hline
  \end{tabular}
\hspace{0.3cm}
\begin{tabular}{|c||c||c|} \hline
 $A$ & $\tilde{\neg} \, A$ & $\tilde{\md} \, A$ \\
 \hline \hline
     $(1,1)$  & $(0,\ast)$  & $(1,\ast)$   \\ \hline
     $(1,0)$  & $(0,\ast)$  & $(0,\ast)$    \\ \hline
     $(0,0)$  & $(1,\ast)$  & $(0,\ast)$   \\ \hline
  \end{tabular}
\end{center}

\noindent In this case, we do not need to change the definitions of the operations.

\paragraph{{\bf MKT4}}

\noindent Let $\M_{\textsf{KT4}}= \langle V_3, D_3, \mathcal{O}\rangle$. The truth-tables for $\M_{\textsf{KT4}}$ can be displayed as follows:

\begin{center}
\begin{tabular}{|c|c|c|c|c|}
\hline
 $\tilde{\to}$ & $(1,1)$   & $(1,0)$ & $(0,0)$ \\
 \hline \hline
     $(1,1)$    & $(1,\ast)$   & $(1,0)$  & $(0,0)$  \\ \hline
     $(1,0)$    & $(1,\ast)$   & $(1,\ast)$ 	& $(0,\ast)$  \\ \hline
     $(0,0)$    & $(1,\ast)$   & $(1,\ast)$ 	& $(1,\ast)$  \\ \hline
  \end{tabular}
\hspace{0.3cm}
\begin{tabular}{|c||c||c|} \hline
 $A$ & $\tilde{\neg} \, A$ & $\tilde{\md} \, A$ \\
 \hline \hline
     $(1,1)$  & $(0,\ast)$  & $(1,1)$   \\ \hline
     $(1,0)$  & $(0,\ast)$  & $(0,\ast)$    \\ \hline
     $(0,0)$  & $(1,\ast)$  & $(0,\ast)$   \\ \hline
  \end{tabular}
\end{center} 

\noindent Alternatively, we can calculate the value of $\Tilde{\md}$ by adding the following conditions to $\M$:

\begin{center}
$
\begin{array}{rll}
\tilde{\md}\, z &:=& (x_1,x_2) = (x_1 = z_2~,~x_2 \geq z_2) 
\end{array}
\
\quad \text{ or }\qquad
\
v_2(\md A) \geq v_2(A)
$
\end{center}

\bigskip

\noindent Let $\mathsf{Ax} \in \{\mathsf{K, KT, KT4}\}$, then we have the following results:

\begin{theorem} [Soundness and completeness of $\Hi_{\textsf{Ax}}$ w.r.t. the Nmatrix $\M_{\textsf{Ax}}$] \label{sound-compl-nmatrix}
Let $\Gamma \cup \{A\} \subseteq For(\Sigma)$. Then: $\Gamma \vdash_{\Hi_{\textsf{Ax}}} A$  iff $\Gamma \vDash_{\M_{\textsf{Ax}}} A$.
\end{theorem}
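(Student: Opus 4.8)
The plan is to prove both directions uniformly, treating $\textsf{Ax} \in \{\textsf{K},\textsf{KT},\textsf{KT4}\}$ in parallel and reusing the bookkeeping behind Theorem~\ref{complete-CPL}. For soundness I would first note that in every $\M_{\textsf{Ax}}$ the first coordinate still computes $\sneg$ and $\Ra$ classically and that the designated values are exactly those with first coordinate $1$; hence the classical axioms \ax1--\ax3 remain valid, and $\MP$ still preserves designation (if $v_1(A)=1$ and $v_1(A \to B)=v_1(A)\Ra v_1(B)=1$ then $v_1(B)=1$). It then remains to check that each newly added modal axiom is designated under every valuation of the corresponding Nmatrix. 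Writing every value as $(v_1,v_2)$ and using $v_1(\md C)=v_2(C)$, the axioms unwind into purely second-coordinate statements: \axK\ becomes $v_2(A\to B)\Ra(v_2(A)\Ra v_2(B))=1$, which is exactly the constraint $v_2(A\to B)\leq v_2(A)\Ra v_2(B)$ imposed in $\M_{\textsf{K}}$; \axT\ becomes $v_2(A)\Ra v_1(A)=1$, which holds because $V_3$ excludes $(0,1)$, i.e. $v_1(A)\geq v_2(A)$; and \axfo\ becomes $v_2(A)\Ra v_2(\md A)=1$, which is the constraint $v_2(\md A)\geq v_2(A)$ imposed in $\M_{\textsf{KT4}}$. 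Since \textsf{KT} carries the \axK\ restriction and \textsf{KT4} carries both the \axK\ and \axT\ restrictions, these constraints are cumulative.

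For completeness I would argue contrapositively via a Lindenbaum/canonical-valuation construction. Suppose $\Gamma \not\vdash_{\Hi_{\textsf{Ax}}} A$. Using the deduction theorem (available since \ax1, \ax2 and $\MP$ are present), extend $\Gamma$ to a set $\Delta$ that is maximal with respect to the property $\Delta \not\vdash_{\Hi_{\textsf{Ax}}} A$. Such $\Delta$ is deductively closed and classically prime: $\neg B \in \Delta$ iff $B \notin \Delta$, and $B \to C \in \Delta$ iff ($B \notin \Delta$ or $C \in \Delta$), while $A \notin \Delta$. I would then define a candidate valuation by $v_1(B)=1$ iff $B \in \Delta$ and $v_2(B)=1$ iff $\md B \in \Delta$, setting $v(B)=(v_1(B),v_2(B))$. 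The first-coordinate clauses for $\tilde{\neg}$ and $\tilde{\to}$ hold by primeness, and $v_1(\md B)=v_2(B)$ holds by definition.

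The crux is to verify that $v$ is a legitimate valuation of $\M_{\textsf{Ax}}$, i.e. that the partly deterministic second-coordinate entries are respected, and this is exactly where the modal axioms enter. For \textsf{K}: if $\md(B\to C)\in\Delta$ and $\md B\in\Delta$, two applications of $\MP$ to \axK\ give $\md C\in\Delta$, so $v_2(B\to C)\leq v_2(B)\Ra v_2(C)$, matching the constraint of $\M_{\textsf{K}}$. For \textsf{T}: \axT\ and $\MP$ give that $\md B\in\Delta$ implies $B\in\Delta$, hence $v(B)\neq(0,1)$ and $v$ takes values in $V_3$. For \textsf{4}: \axfo\ and $\MP$ give that $\md B\in\Delta$ implies $\md\md B\in\Delta$, hence $v_2(\md B)\geq v_2(B)$. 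Thus $v$ is a valuation over $\M_{\textsf{Ax}}$ with $v(B)$ designated for all $B\in\Gamma$ (since $\Gamma\subseteq\Delta$) and $v(A)$ non-designated (since $A\notin\Delta$), witnessing $\Gamma\not\vDash_{\M_{\textsf{Ax}}}A$.

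The delicate part, and the one I expect to be the main obstacle, is precisely this last verification: showing that the canonical valuation honours the \emph{forced} second-coordinate values of $\M_{\textsf{Ax}}$, not merely the classical first-coordinate behaviour. The content of the theorem lies in establishing that the three structural restrictions (the \axK-inequality on $\tilde{\to}$, the deletion of $(0,1)$ from $V_4$, and the \axfo-inequality on $\tilde{\md}$) correspond exactly to derivability of \axK, \axT\ and \axfo. Once this correspondence is in place, both directions close by the routine designated-set reasoning already underlying Theorem~\ref{complete-CPL}.
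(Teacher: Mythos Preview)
Your proposal is correct and follows the standard route. Note that the paper itself omits the proof of this theorem entirely, pointing to previous publications; the argument you sketch (soundness by checking that each modal axiom unwinds, via $v_1(\md C)=v_2(C)$, into precisely the second-coordinate constraint built into $\M_{\textsf{Ax}}$, and completeness via an $A$-saturated extension $\Delta$ with the canonical valuation $v_1(B)=1\Leftrightarrow B\in\Delta$, $v_2(B)=1\Leftrightarrow\md B\in\Delta$) is exactly that standard route, and it is also the method the paper spells out for the RNmatrix case in Proposition~\ref{A-sat-HK}, Corollary~\ref{bival-sat} and Theorem~\ref{complete-HK-RN}.
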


\noindent We omit the proofs, as they can be found in detail in previous publications.

\section{RNmatrices for {\bf MK}, {\bf MKT} and {\bf MKT4}}\label{RNmat-exteM}

In this section, we will go into a little more detail, since we do not expect readers to be familiar with what in \cite{Coniglio2021a} was called  {\em restricted Nmatrices} (RNmatrices). In short, the set $\F$ of valuations over the Nmatrix $\M$ will be restricted to specific subsets $\F' \subseteq \F$ with the aim of satisfying certain modal axiom(s).
In particular, we will consider RNmatrices of the form $\RM=\langle \M, \F'\rangle$ such that $\F' \subseteq \F$. I.e., each axiom that extends {\bf M} will restrict the set of acceptable valuation.

\paragraph{{\bf MK}}

Consider
$$\axK:\md(A \to B) \to (\md A \to \md B)$$ 
Then, a valuation $v \in \F$ satisfies \axK\ iff, $v_1(\md(A \to B) \to (\md A \to \md B))=1$ for every $A,B$ iff 
$v_1(\md(A \to B)) \Ra (v_1(\md A) \Ra v_1(\md B))=1$ for every $A,B$ iff $v_1(\md(A \to B)) \leq v_1(\md A) \Ra v_1(\md B)$ for every $A,B$ iff $v_2(A \to B) \leq v_2(A) \Ra v_2(B)$ for every $A,B$. Hence, the logic ${\bf MK}$ satisfying axiom \axK\ is characterized by the RNmatrix $\RM_{\textsf{K}}=\langle \M, \F_{\textsf{K}}\rangle$ such that 
$$\F_{\textsf{K}}=\{v \in \F \ : \ v_2(A \to B) \leq v_2(A) \Ra v_2(B) \ \mbox{ for every $A,B$} \}.$$

\paragraph{{\bf MKT}} Consider
$$\axT:\md A  \to A$$ 
Then, a valuation $v \in \F$ satisfies \axT\ iff, $v_1(\md A  \to A)=1$ for every $A$ iff 
$v_1(\md A)  \Ra v_1(A)=1$ for every $A$ iff $v_1(\md A)  \leq v_1(A)$ for every $A$ iff $v_2(A)  \leq v_1(A)$ for every $A$. Hence, the logic ${\bf MKT}$ satisfying axioms \axK and \axT\ is characterized by the RNmatrix $\RM_{\textsf{KT}}=\langle \M, \F_{\textsf{KT}}\rangle$ such that 
$$\F_{\textsf{KT}}=\{v \in \F \ : \ v_2(A \to B) \leq v_2(A) \Ra v_2(B) \  \mbox{and} \ v_2(A)  \leq v_1(A) \mbox{ for every $A,B$} \}.$$

\paragraph{{\bf MKT4}} Consider
$$\axfo:\md A  \to \md\md A$$ 
Then, a valuation $v \in \F$ satisfies \axfo\ iff, $v_1(\md A  \to \md\md A)=1$ for every $A$ iff 
$v_1(\md A)  \Ra v_1(\md\md A)=1$ for every $A$ iff $v_1(\md A)  \leq v_1(\md\md A)$ for every $A$ iff $v_2(A)  \leq v_2(\md A)$ for every $A$. Hence, the logic ${\bf MKT4}$ satisfying axioms \axK, \axT and \axfo\ is characterized by the RNmatrix $\RM_{\textsf{KT4}}=\langle \M, \F_{\textsf{KT4}}\rangle$ such that 
$$\F_{\textsf{KT4}}=\{v \in \F \ : \ v_2(A \to B) \leq v_2(A) \Ra v_2(B) \  \mbox{and} \ v_2(A)  \leq v_1(A) \  \mbox{and} \ v_2(A)  \leq v_2(\md A) \mbox{ for every $A,B$} \}.$$  

\noindent Each of the generated RNmatrices  will be {\em structural}, that is, the set $\F'$ is required to be closed under substitutions:
if $v \in \F'$ and $\rho$ is a substitution over $\Sigma$ then $v \circ \rho \in \F'$.
As proved in \cite{Coniglio2021a}, any structural RNmatrix generates a Tarskian and structural consequence relation defined as expected: $\Gamma \vDash_{\RM} A$ iff, for every $v \in \F'$: if $v(B) \in D_4$ for every $B \in \Gamma$ then $v(A) \in D_4$.

Recall, a {\em substitution} over the signature $\Sigma$ is a function $\sigma:\mathcal{V} \to For(\Sigma)$. Since $For(\Sigma)$ is an absolutely free algebra, each $\sigma$ can be extended to a unique endomorphism in $For(\Sigma)$ (which will be also denoted by $\sigma$). That is, $\sigma:For(\Sigma) \to For(\Sigma)$ is such that $\sigma(\# A)=\#\sigma(A)$ for $\# \in \{\neg,\md\}$, and $\sigma(A \to B)=\sigma(A) \to \sigma(B)$. The set of substitutions over $\sigma$ (seen as endomorphisms in $For(\Sigma)$) will be denoted by $Subs(\Sigma)$.

Clearly, $\RM_{\textsf{K}}$, $\RM_{\textsf{KT}}$ and $\RM_{\textsf{KT4}}$ are structural, hence generate a Tarskian and structural consequence relation $\vDash_{\RM_{\textsf{K}}}$, $\vDash_{\RM_{\textsf{KT}}}$, $\vDash_{\RM_{\textsf{KT4}}}$. 

E.g., let $\rho$ be a substitution and let $v \in \F_{\textsf{K}}$. 
Observe that $v \circ \rho=(v_1 \circ \rho,v_2\circ \rho)$. Then, for every $A,B$: $v_2\circ \rho(A \to B)= v_2(\rho(A \to B))= v_2(\rho(A) \to \rho(B)) \leq v_2(\rho(A)) \Ra v_2((\rho(B))= v_2\circ\rho(A) \Ra v_2\circ \rho(B)$. Hence $v \circ \rho \in \F_{\textsf{K}}$.

\bigskip

\noindent We will now sketch soundness and completeness results for $\Hi_{\textsf{K}}$, the proofs for $\Hi_{\textsf{KT}}$ and $\Hi_{\textsf{KT4}}$ follow the same structure. More details for soundness and completeness of Hilbert calculi wrt RNmatrices can be found for example \cite{Coniglio2021a}. To this end, we will make use of well-known definitions.

Recall that, given a Tarskian and finitary logic {\bf L}, a set of formulas $\Delta$ is said to be $A$-saturated (where $A$ is a formula) if $\Delta \nvdash_{\bf L} A$ but $\Delta,B \vdash_{\bf L} A$ for every formula $B$ such that $B \notin \Delta$. If $\Delta$ is $A$-saturated then it is a closed theory, that is: $\Delta \vdash_{\bf L} B$ iff $B \in \Delta$. It is well-known that, in any Tarskian and finitary logic {\bf L}, if $\Gamma   \nvdash_{\bf L} A$ then there exists an $A$-saturated set $\Delta$ such that $\Gamma \subseteq \Delta$. Since the logic generated by $\Hi_{\textsf{K}}$ is Tarskian and finitary, it has this property.

\begin{proposition} \label{A-sat-HK}
Let $\Delta$ be an $A$-saturated set in  $\Hi_{\textsf{K}}$. Then, for every formulas $A,B$:\\
(1) $\neg A \in \Delta$ iff $A \not\in \Delta$;\\
(2) $A \to B \in \Delta$ iff either $A \not\in \Delta$ or $B \in \Delta$;\\
(3) if $\md(A \to B) \in \Delta$ and $\md A \in \Delta$  then $\md B \in \Delta$. 
\end{proposition}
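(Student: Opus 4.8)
The plan is to exploit the fact that $\Hi$ — and therefore its extension $\Hi_{\textsf{K}}$ — contains full classical propositional logic: the schemas $(\ax 1)$–$(\ax 3)$ together with $(\MP)$ axiomatize \cpl, so every classical theorem is derivable and, crucially, the deduction metatheorem holds (it needs only $(\ax 1)$, $(\ax 2)$ and $(\MP)$, and is therefore unaffected by adding the axiom \axK). I will also use repeatedly that an $A$-saturated $\Delta$ is a closed theory, so that ``$B \in \Delta$'' and ``$\Delta \vdash_{\Hi_{\textsf{K}}} B$'' are interchangeable. To keep the clash between the formula naming the saturation target and the formula variables $A,B$ in the statement under control, I write $\varphi$ for the fixed formula with respect to which $\Delta$ is saturated, so that $\Delta \nvdash_{\Hi_{\textsf{K}}} \varphi$.

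For item (1), the left-to-right direction is immediate from explosion: if both $\neg A$ and $A$ were in $\Delta$, then $\Delta \vdash_{\Hi_{\textsf{K}}} \varphi$, contradicting $A$-saturation; since $\neg A \in \Delta$ is assumed, $A \notin \Delta$. For the converse I would argue by cases on $\neg A$. Assume $A \notin \Delta$; by saturation $\Delta, A \vdash_{\Hi_{\textsf{K}}} \varphi$, hence $\Delta \vdash_{\Hi_{\textsf{K}}} A \to \varphi$ by the deduction theorem. If we also had $\neg A \notin \Delta$, the same step would give $\Delta \vdash_{\Hi_{\textsf{K}}} \neg A \to \varphi$, and the classical proof-by-cases schema $(A \to \varphi) \to ((\neg A \to \varphi) \to \varphi)$ would yield $\Delta \vdash_{\Hi_{\textsf{K}}} \varphi$, a contradiction; hence $\neg A \in \Delta$. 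Item (2) then follows the familiar maximal-consistent-set pattern, using (1): left-to-right is one application of $(\MP)$, since if $A \to B \in \Delta$ and $A \in \Delta$ then $B \in \Delta$; for right-to-left, if $B \in \Delta$ use $(\ax 1)$ to get $A \to B \in \Delta$, while if $A \notin \Delta$ use (1) to obtain $\neg A \in \Delta$ and then the classical theorem $\neg A \to (A \to B)$.

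Item (3) is the only place where modality enters, and it is a direct computation: since \axK\ is an axiom schema of $\Hi_{\textsf{K}}$, we have $\Delta \vdash_{\Hi_{\textsf{K}}} \md(A \to B) \to (\md A \to \md B)$; two applications of $(\MP)$, using first $\md(A \to B) \in \Delta$ and then $\md A \in \Delta$, give $\Delta \vdash_{\Hi_{\textsf{K}}} \md B$, whence $\md B \in \Delta$ by closure. I do not expect a genuine obstacle here: the whole proposition is routine once the \cpl\ infrastructure is recorded. The only points requiring care are keeping the overloaded letter $A$ straight and noting explicitly that the deduction theorem survives the extension by \axK\ (so that the case argument in (1) is legitimate), since that is the single nontrivial ingredient on which the proof rests.
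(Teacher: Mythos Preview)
Your argument is correct and follows exactly the route the paper has in mind: the paper's own proof is the one-liner ``Immediate, by definition of $\Hi_{\textsf{K}}$ and the fact that $\Delta$ is a closed theory,'' and you have simply unpacked that remark into the standard maximal-consistent-set computations. Your care with the deduction theorem and with renaming the saturation target $\varphi$ is appropriate, but nothing beyond what the paper tacitly assumes.
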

\begin{proof}
Immediate, by definition of $\Hi_{\textsf{K}}$ and the fact that $\Delta$ is a closed theory.
\end{proof}

\begin{corollary} \label{bival-sat} Let $\Delta$ be an $A$-saturated set in  $\Hi_{\textsf{K}}$. Then, $B \in \Delta$ iff $v(B) \in D_4$ for some $v\in \mathcal{F}_{\textsf{K}}$.
\end{corollary}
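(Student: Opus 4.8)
The plan is to prove Corollary~\ref{bival-sat} by constructing, from an $A$-saturated set $\Delta$, a specific valuation $v=(v_1,v_2) \in \F_{\textsf{K}}$ that witnesses membership in $\Delta$ through designation. The natural candidate is the canonical valuation determined by $\Delta$: set $v_1(B)=1$ iff $B \in \Delta$, and set $v_2(B)=1$ iff $\md B \in \Delta$. The right-to-left direction is then almost immediate, since $v(B) \in D_4$ means $v_1(B)=1$, which by definition says $B \in \Delta$; so the substance of the proof is the left-to-right direction together with the verification that this $v$ is genuinely a valuation in $\F_{\textsf{K}}$.

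First I would check that $v$ is a well-defined valuation over $\M$, i.e. that $v \in \F$. This amounts to verifying the clauses recorded earlier in the excerpt: $v_1(\neg B)=\sneg v_1(B)$, $v_1(B \to C)=v_1(B) \Ra v_1(C)$, and $v_1(\md B)=v_2(B)$. The first two follow directly from parts~(1) and~(2) of Proposition~\ref{A-sat-HK}, which characterize membership of negations and implications in $\Delta$ in exactly the Boolean way; the third holds by the very definition of $v_2$, since $v_1(\md B)=1$ iff $\md B \in \Delta$ iff $v_2(B)=1$. Next I would confirm the restriction defining $\F_{\textsf{K}}$, namely $v_2(B \to C) \leq v_2(B) \Ra v_2(C)$ for all $B,C$. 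Unwinding the definition of $v_2$, this condition says precisely: if $\md(B \to C) \in \Delta$ and $\md B \in \Delta$ then $\md C \in \Delta$, which is exactly clause~(3) of Proposition~\ref{A-sat-HK}. Thus the $A$-saturation of $\Delta$ in $\Hi_{\textsf{K}}$ is what forces $v$ into the restricted valuation set $\F_{\textsf{K}}$.

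With $v \in \F_{\textsf{K}}$ established, the left-to-right direction is then direct: if $B \in \Delta$ then $v_1(B)=1$ by construction, so $v(B)\in D_4$, and this $v$ serves as the required witness. For the converse, if $v(B)\in D_4$ for this canonical $v$ then $v_1(B)=1$, so $B \in \Delta$. I would note that the corollary as stated asserts the equivalence with ``for some $v \in \F_{\textsf{K}}$''; the canonical valuation supplies the existential witness in one direction, while in the other direction one uses that $B \notin \Delta$ forces $v_1(B)=0$ for the canonical $v$, together with the observation that $\Delta$ being a closed theory means non-membership is stable.

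The main obstacle I anticipate is not any single hard calculation but rather being careful about the logical direction of the equivalence and which valuation is quantified: the corollary mixes a universal characterization of $\Delta$ with an existential quantifier over $\F_{\textsf{K}}$, so the cleanest argument is to exhibit the canonical $v$ explicitly and show both that it lies in $\F_{\textsf{K}}$ and that it tracks $\Delta$ exactly via $v_1$. The only place where genuine content enters is the verification that clause~(3) of Proposition~\ref{A-sat-HK} matches the defining inequality of $\F_{\textsf{K}}$; everything else is bookkeeping driven by the bivalent clauses established earlier for $v \in \F$.
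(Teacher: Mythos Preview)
Your proposal is correct and is precisely the standard Lindenbaum-style construction the paper has in mind; the paper's own proof is the one-liner ``It is immediate from Proposition~\ref{A-sat-HK},'' and you have simply supplied the expected details (defining the canonical $v$ by $v_1(B)=1$ iff $B\in\Delta$ and $v_2(B)=1$ iff $\md B\in\Delta$, then checking the $\F$-clauses via items~(1)--(2) and the $\F_{\textsf{K}}$-restriction via item~(3)). Your remark about the scope of the existential quantifier is also apt: the corollary is intended to assert the existence of a single $v\in\F_{\textsf{K}}$ tracking $\Delta$ for all $B$, exactly as it is used in the completeness direction of Theorem~\ref{complete-HK-RN}.
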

\begin{proof}
It is immediate from Proposition~\ref{A-sat-HK}.\footnote{More detailed proofs regarding soundness and completeness for RNmatrices can be found in \cite{Coniglio2021a}}
\end{proof}

\begin{theorem} [Soundness and completeness of $\Hi_{\textsf{K}}$] \label{complete-HK-RN}
Let $\Gamma \cup \{A\} \subseteq For(\Sigma)$. Then: $\Gamma \vdash_{\Hi_{\textsf{K}}} A$  iff $\Gamma \vDash_{\RM_{\textsf{K}}} A$.
\end{theorem}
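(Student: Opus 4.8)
The plan is to prove the two directions of the biconditional separately, following the standard soundness-and-completeness template for Hilbert calculi against (R)Nmatrix semantics.

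For \emph{soundness} ($\Gamma \vdash_{\Hi_{\textsf{K}}} A$ implies $\Gamma \vDash_{\RM_{\textsf{K}}} A$), I would argue by induction on the length of a derivation. First I would check that each axiom schema of $\Hi$ ($\ax 1$, $\ax 2$, $\ax 3$) is designated under every $v \in \F_{\textsf{K}}$; since these are classical schemata and the first coordinate $v_1$ behaves exactly as a classical valuation (via the bullet-point clauses $v_1(\neg A)=\sneg v_1(A)$ and $v_1(A\to B)=v_1(A)\Ra v_1(B)$), designation follows from classical tautologyhood. The new axiom \axK\ is designated precisely because $v\in\F_{\textsf{K}}$ satisfies $v_2(A\to B)\leq v_2(A)\Ra v_2(B)$ for all $A,B$ --- this is exactly the defining restriction of $\F_{\textsf{K}}$, and indeed the derivation establishing that restriction (given just before the definition of $\F_{\textsf{K}}$) is what guarantees $v_1(\md(A\to B)\to(\md A\to\md B))=1$. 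Finally, \MP\ preserves designation: if $v(B')\in D_4$ and $v(B'\to A')\in D_4$ then $v_1(B')=1$ and $v_1(B')\Ra v_1(A')=1$, whence $v_1(A')=1$.

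For \emph{completeness} ($\Gamma \vDash_{\RM_{\textsf{K}}} A$ implies $\Gamma \vdash_{\Hi_{\textsf{K}}} A$), I would argue contrapositively. Assume $\Gamma \nvdash_{\Hi_{\textsf{K}}} A$. Since the logic generated by $\Hi_{\textsf{K}}$ is Tarskian and finitary, I may extend $\Gamma$ to an $A$-saturated set $\Delta \supseteq \Gamma$ with $\Delta \nvdash_{\Hi_{\textsf{K}}} A$. The heart of the argument is to manufacture a valuation $v \in \F_{\textsf{K}}$ that designates exactly the members of $\Delta$; this is supplied by Corollary~\ref{bival-sat}, which furnishes some $v \in \F_{\textsf{K}}$ with $B\in\Delta$ iff $v(B)\in D_4$. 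Then $v(B)\in D_4$ for every $B\in\Gamma\subseteq\Delta$, while $v(A)\notin D_4$ because $A\notin\Delta$ (a saturated set is a closed theory and $\Delta\nvdash_{\Hi_{\textsf{K}}}A$). This witnesses $\Gamma \nvDash_{\RM_{\textsf{K}}} A$, as required.

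The main obstacle is the construction underlying Corollary~\ref{bival-sat}, namely exhibiting a witnessing valuation in $\F_{\textsf{K}}$ rather than merely in $\F$. Defining $v_1$ as the characteristic function of $\Delta$ is routine, and Proposition~\ref{A-sat-HK}(1)--(2) guarantees the classical clauses for $v_1$ hold. The delicate part is specifying $v_2$ (the $\md$-coordinate) so that simultaneously $v_1(\md B)=v_2(B)$ agrees with membership of $\md B$ in $\Delta$ \emph{and} the constraint $v_2(A'\to B')\leq v_2(A')\Ra v_2(B')$ holds for all $A',B'$; the natural choice $v_2(B):=1$ iff $\md B\in\Delta$ makes the $\F_{\textsf{K}}$-restriction follow exactly from Proposition~\ref{A-sat-HK}(3). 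I would verify that this $v$ lies in $\F_{\textsf{K}}$ and then confirm the biconditional $B\in\Delta$ iff $v(B)\in D_4$ by checking it reduces to $v_1(B)=1$, which holds by construction.
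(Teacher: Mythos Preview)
Your proposal is correct and follows essentially the same approach as the paper: soundness by induction on derivations (checking axioms and \MP), completeness by the contrapositive via an $A$-saturated extension $\Delta\supseteq\Gamma$ and an appeal to Corollary~\ref{bival-sat}. You in fact go further than the paper by spelling out the canonical valuation $v_1(B)=1$ iff $B\in\Delta$, $v_2(B)=1$ iff $\md B\in\Delta$ and verifying that Proposition~\ref{A-sat-HK}(3) yields the $\F_{\textsf{K}}$-restriction, whereas the paper leaves this construction implicit (deferring details to~\cite{Coniglio2021a}).
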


\begin{proof} \ \\
{\em (Soundness)}: It is an easy exercise to show that every axiom of $\Hi_{\textsf{K}}$ is valid w.r.t. $\RM_{\textsf{K}}$ and that $\MP$ preserves the designated values.
Hence, by induction on the length of a derivation in $\Hi_{\textsf{K}}$  of $A$ from $\Gamma$, it is easy to see the following: $\Gamma \vdash_{\Hi_{\textsf{K}}} A$  implies that $\Gamma \vDash_{\RM_{\textsf{K}}} A$.\\
{\em (Completeness)}: Suppose that $\Gamma \nvdash_{\Hi_{\textsf{K}}} A$. As observed above, there exists an $A$-saturated set $\Delta$ such that $\Gamma \subseteq \Delta$. By Corollary~\ref{bival-sat} there is a valuation $v\in \mathcal{F}_{\textsf{K}}$ such that 
$v(B)=1$ for every $B \in \Gamma$ but $v(A)=0$. This shows that $\Gamma \nvDash_{\RM_{\textsf{K}}} A$.
\end{proof}

\section{Nmatrices vs. RNmatrices}\label{sec:n-vs-rn}

In this short article we sketched out two general semantical framework for modal logics that do not rely on the notion of possible worlds. Both frameworks adopt non-deterministic semantics in a creative way in order to construct alternative semantics for systems with modalities, but have different strategies for constructing extensions of the minimal modal logic {\bf M}. While the status of {\bf M} as a modal logic itself, can be discussed, in this section we will briefly compare both approaches and consider some similarities and differences of them.

We will start by showing how to extend the systems discussed in this article with the rule of necessitation in order to construct normal modal logics. This can be done for Nmatrices and RNmatrices in a similar manner. Then, we will show some limitations of both approaches. 
Finally, we will conclude with remarks on some philosophical issues concerning both approaches.

\subsection{Normal modal logics}

The systems presented before, even though we call them modal logics, are generally not received as such, with the reason being that rules for the modal operators are not present. And while we will not in full detail discuss our rationale behind our terminology, we can certainly provide the technical means such that the systems can be extended by (any) modal rules. 

As an example, we will present the changes to the semantics needed for validating the rule of necessitation ({\sf N}). 

Let $\mathsf{Ax} \in \{\mathsf{K, KT, KT4}\}$. Then $\Hi_\ax^{\sf N}$ is the Hilbert calculus obtained from  $\Hi_\ax$ by adding the necessitation  inference rule (where $A$ is a propositional variable):  $$\frac{A}{\md A} \quad({\sf N})$$
A formula $B$ is derivable in  $\Hi_\ax^{\sf N}$ if there exists a finite sequence of formulas $A_1 \ldots A_m$ such that $A_m=B$ and, for every $1\leq i\leq m$, either $A_i$ is an instance of an axiom, or it follows from $A_j=A_k \to A_i$ and $A_k$ by \MP), for some $j,k<i$, or $A_i=\md A_j$ follows from $A_j$ (for some $j<i$) by the {\sf N}-rule. Finally, $A$ is derivable from $\Gamma$ in  $\Hi_\ax^{\sf N}$ if either $A$ is derivable in  $\Hi_\ax^{\sf N}$, or  $B_1  \to (B_2 \to ( \ldots \to (B_k \to A) \ldots )$ is derivable in  $\Hi_\ax^{\sf N}$ for some nonempty finite set $\{B_1,\ldots,B_k\} \subseteq \Gamma$.

First observe that neither Nmatrices nor RNmatrices will capture the behavior of the {\sf N}-rule. To see this, take for example any classical tautology, e.g. $A \to A$. Due to both kinds of semantics, this formula will receive the value $(1,0)$, for some valuation $v$. But then for this valuation we have $v(\md (A\to A)) \notin D_4$. Hence the {\sf N}-rule is not valid.

The key idea for the validity of that rule was given by John Kearns in \cite{Kearns81}. There he restricted the set of acceptable valuations by a simple strategy. If a formula $A$ receives a designated value wrt all possible valuation, than $\md A$ will also receive a designated value. I.e., for the formula $A$ in question, all valuations $v$, such that $v(A) = (1,0)$, will be eliminated from the set of acceptable valuations.  
The original idea of Kearns was later put in the modern context of non-deterministic semantics, see for example \cite{Coniglio2015, OmoriSkurt16}. Following the results from~\cite{Coniglio2015,Kearns81,OmoriSkurt16}, we define the set of valuations over the RNmatrix $\RM_{\ax}=\langle \M, \F_\ax\rangle$ as follows:

\begin{itemize}
    \item $\F^0_\ax = \F_\ax$
    \item $\F^{m+1}_\ax = \big\{v \in \F^{m}_\ax \ : \ \forall B\in For(\Sigma),  \text{ if }\forall w\in  \F^{m}_\ax(w_1(B) = 1)  \text{ then } v_2(B)=1  \big\}$
    \item $\F^{\sf N}_\ax = \overset{\infty}{\underset{m=0}{\bigcap}} \F^m_\ax$
\end{itemize}
Observe that $\F^{\sf N}_\ax$ coincides with the original definition of level valuations introduced by Kearns and therefore can also applied to Nmatrices by using instead of the restricted set of valuation $\F_\ax$, the set of all valuation $\F$.

We then define a new semantical consequence as follows: 

\begin{enumerate}
\item $A$ is valid, denoted by $\models_{\RM_{\ax}} A$, if $v_1(A)=1$ for every $v \in \F^{N}_\ax$.
\item $A$ is a semantical consequence of $\Gamma$, denoted by $\Gamma \models_{\RM_{\ax}} A$, if either $A$ is valid, or  $B_1  \to (B_2 \to ( \ldots \to (B_k \to A) \ldots )$ is valid for some nonempty finite set $\{B_1,\ldots,B_k\} \subseteq \Gamma$.
\end{enumerate}

For both, Nmatrices and RNmatrices, soundness and completeness results can be obtained in a straight forward manner, by adapting the results from \cite{Coniglio2015,Kearns81,OmoriSkurt16}.

Recent unpublished results by the authors also show that this technique can be adapted for any (global) modal rule, and thus offering semantics for well-known non-normal modal logics as well. In this regard, both kinds of semantics offer a unifying framework for a various range of (non)-normal modal logics. But while the technique of level valuations applied to Nmatrices might seem ad hoc, it is a generalization of restricting the valuations for RNmatrices. In fact, one could interpret the restriction method for RNmatrices as a local restriction of valuations and the level valuation technique as a global restriction of valuation. 

\smallskip

Finally, we note that it is possible to expand the language with additional modal operators. Semantically this can be done by adding more dimensions to the truth-values. Rather than pairs, truth-values would then be $n$-tuples, depending on the number of additional modal operators. 

For example, we can consider a bimodal version of the minimal modal logic {\bf M}, namely the minimal bimodal logic {\bf M2}. This logic is defined over a signature $\Sigma_2$ obtained from $\Sigma$ by replacing $\md$ with two modal operators, which will be denoted by $\md_1$ and $\md_2$. As expected, the snapshots are now triples $z=(z_1,z_2,z_3)$ over {\bf 2} in which each coordinate represents a possible truth-value for the formulas $A$, $\md_1 A$ and $\md_2 A$, respectively. Hence, eight truth-values are $V_8 = \{(z_1,z_2,z_3) \ : \ z_1,z_2,z_3 \in \{1,0\}\}$, with $D_8=\{z \in V_8 \ : \  z_1=1\}$ being the set of designated values. 

The definition of the multioperators over $V_8$ interpreting the connectives of $\Sigma_2$ is a natural generalization  of the 4-valued case:

$
\begin{array}{rll}
\tilde{\neg}\, z&:=& (\sneg z_1,\ast,\ast);\\
\tilde{\md}_1\, z&:=& (z_2,\ast,\ast);\\
\tilde{\md}_2\, z&:=& (z_3,\ast,\ast);\\
z \,\tilde{\to}\, w &:=& (z_1 \Ra w_1,\ast,\ast). 
\end{array}
$

\noindent Let $\M_2= \langle V_8, D_8, \mathcal{O}_2\rangle$ be the obtained 8-valued Nmatrix, where $\mathcal{O}_2(\#)=\tilde{\#}$ for every connective $\#$ in $\Sigma_2$. Thus, the truth-tables for $\M_2$ are the following (for reasons of limited space we only present the truth-tables for $\md_1$ and $\md_2$): 

\begin{center}
$
\begin{array}{|c||c|c|c|c|c|c|c|c|}\hline
A & (1,1,1) & (1,1,0) & (1,0,1)& (1,0,0)& (0,1,1)& (0,1,0)& (0,0,1)& (0,0,0) \\\hline \hline
\md_1 A & (1,\ast,\ast) & (1,\ast,\ast) & (0,\ast,\ast)& (0,\ast,\ast)& (1,\ast,\ast)& (1,\ast,\ast)& (0,\ast,\ast)& (0,\ast,\ast) \\\hline \hline
\md_2 A & (0,\ast,\ast)& (0,\ast,\ast)& (1,\ast,\ast)& (1,\ast,\ast)& (0,\ast,\ast)& (0,\ast,\ast)&(1,\ast,\ast)&(1,\ast,\ast)\\\hline
\end{array}
$
\end{center}

While extensions of {\bf M2} wrt Nmatrices have been discussed at large for example in \cite{Coniglio2021, Lukas2022, OmoriSkurt16, OmoriSkurt2020, OmoriSkurt2021, pawlowski2022, PawlowskiSkurt2023, PawlowskiSkurt2024}, we present restricions for some axioms wrt RNmatrices:

\begin{center}
\begin{tabular}{|c|c|} \hline
Axiom & Restrictions \\
 \hline \hline
$\md_2 A \leftrightarrow \neg\md_1 \neg A$ &  $v_2(\neg A) = {\sim}v_3(A)$ \\[1mm] \hline
$\md_1 A \leftrightarrow \neg\md_2 \neg A$ &  $v_3(\neg A) = {\sim}v_2(A)$ \\[1mm] \hline
$\md_1 A \to \md_2 A$ &  $v_2(A) \leq v_3(A)$ \\[1mm] \hline
$A \to \md_1\md_2 A$ & $v_1(A) \leq v_2(\md_2 A)$ \\[1mm] \hline
$\md_2 A \to \md_1 \md_2 A$ & $v_3(A) \leq v_2(\md_2 A)$ \\[1mm] \hline
$\md_2\md_1 A \to \md_1\md_2 A$ & $v_3(\md_1 A) \leq v_2(\md_2 A)$ \\[1mm] \hline
  \end{tabular}
\end{center}

We just note in passing that it is not known, whether the last axiom, $\md_2\md_1 A \to \md_1\md_2 A$, can be represented in terms of Nmatrices at all. But both approaches to modal logics are certainly capable of producing semantics for a wide range of normal modal logics.

\subsection{Scope and limits of the methods}

Semantics for modal logics via Nmatrices share the heuristics of systematically eliminating semantical values or non-determinacy from non-deter\-minis\-tic truth-tables to validate desired axioms. This approach is successful in providing a uniform semantics for a broad class of normal and non-normal modal systems, even for some systems that lack possible worlds semantics, cf.  \cite{OmoriSkurt2020}. Hence, the proposed framework seems not only more general but also conceptually conservative since the meaning of modal operators was kept uniformly.

However, the technique of eliminating values or non-determinacy has its limitations. It became apparent that not all modal axioms could be straightforwardly represented in a non-deterministic truth-table format, such as the Gödel-Löb axiom $\md(\md A \to A) \to \md A$ \axGL~or other non-Sahlquist formulas. The possibility of providing Nmatrices for such formulas remains uncertain.

RNmatrices on the other hand are much more flexible in this regard. In fact, as shown in \cite{Coniglio2021a}, RNmatrices are stronger than Nmatrices. For example, by analysis similar to the one given for $\axK$, $\axT$ or $\axfo$, it is immediate to see that the restriction on the valuation imposed by this axiom is the following:
 $v_2(\md A \to A) \leq  v_2(A)$. 

However, we are aware of the fact that even RNmatrices are not yet as flexible as Kripke semantics regarding some properties. For example, we have not discussed axiom systems with an infinite number of axioms. While the construction method of RNmatrices for extensions of \textbf{M} might give us some arguments that, at least, recursively defined infinite sets of axioms might be expressed in terms of RNmatrices, we might discuss infinite axiom systems, but leave this as a project for future work.

As binary modal operators, like strict implication, the situation is slightly more complicated. For example, in case of strict implication, it seems, the corresponding Kripke semantics implicitly uses a global rule in the definition of the operators, which is something that cannot be expressed in terms of RNmatrices alone, at least not in straight forward manner. We could think of defining strict implication $\strictif$ in terms of $\to$, as follows: $v(A\strictif B) = v(\Box(A\to B))$ and depending on our semantics for $\Box$, we could define different notions of strictness. However, without globally restricting the set of all valuations, none of the sentences $A\strictif B$ would be a tautology. That is because no sentence $A\to B$ would will be assigned the value $(1,1)$ for all valuations. Again, for the moment, we will leave the question of how to define $n$-ary modal operators open. 

Another limitation of our approach at this moment is a property that is called analyticity. In short, if analyticity holds any partial valuation which seems to refute a given formula can be extended to a full valuation (which necessarily refutes that formula too). For example in \cite{OmoriSkurt16} it was shown that modal logics defined in terms of Nmatrices with global modal rules do not enjoy this property. It is obvious that the failure of analyticity carries over to RNmatrices with global modal rules. Since the failure of analyticity is related to decidability, it seems our presented semantics for modal logics with global modal rules are not decidable. It should of course be mentioned, that in the absence of such global modal rules it can be shown that our semantics are indeed decidable. Needless to say there is gleam of hope. In more recent publications, cf. \cite{Lukas2022} and \cite{lav:zoh:22}, it was shown that by a slight adjustment of the level-valuations technique it is possible to regain decidability. The results were proven for the normal modal logics \textbf{K}, \textbf{KT} and \textbf{S4} expressed in terms of Nmatrices. Recently, in~\cite{coniglio2023} the method for \textbf{S4} obtained in~\cite{Lukas2022} was adapted to obtain a decidable RNmatrix with level valuations for intuitionistic propositional logic. It is therefore only a matter of time to prove similar results for other modal (or non-classical) logics with global inference rules.

\subsection{Philosophical Remarks}

The two approaches we presented lead to semantics with sound and complete axiom systems with global modal rules. In that sense, at least with the addition of modal rules, we are justified to claim that we are actually doing modal logics. However, in the absence of such global modal rules, it seems, at the very least, questionable what the status of our operator $\md$ might be. Surely, we can define restrictions on the set of valuations that validate well-known modal formulas. But this is not yet an argument in favor of the modal nature of $\md$. We could furthermore think of concrete well-studied modal systems such as systems of epistemic of deontic logic, where the rule of necessitation is the source of some paradoxes and therefore not unrestrictedly valid. But even in such systems other global modal rules are present, such as congruentiality.

There are logics, called hyperintensional logics, for which even congruentiality fails to hold, cf. \cite{sep-hyperintensionality}, and our approach is certainly able to capture such logics, as well, but we should be very clear, that we are not discussing any particular modal operator. Instead, what can be said in favor of our approach, we are able to capture a multitude of different modal concepts under one and the same umbrella -- RNmatrices (or to a lesser extent Nmatrices) with or without global modal rules. Whether this will lead to a new understanding of the concept of modality remains to be open, and needs to be part of a larger investigation and discussion in the future.

Finally, we just remark in passing that the obvious elephant in the room, namely the correlation between Kripke semantics and Nmatrices/ RNmatrices has so far not yet been thoroughly investigated, even though it seems to be a captivating topic. We will leave this subject for future research.

\nocite{*}
\bibliographystyle{eptcs}
\bibliography{genericM}
\end{document}